\documentclass[conference]{IEEEtran}
\usepackage{cite}
\usepackage{amsmath,amssymb,amsfonts}
\usepackage{algorithmic}
\usepackage{graphicx}
\usepackage{textcomp}
\usepackage{amsmath}
\usepackage{bm}
\usepackage{amsmath,bbm,amssymb,amsfonts,amstext,amsopn}
\usepackage{amsthm}
\usepackage{amssymb}
\usepackage{epstopdf}

\usepackage{amsthm}
\usepackage{cite}
\usepackage{balance}
\allowbreak
\usepackage{url}
\usepackage{epsfig}
\usepackage{setspace}
\usepackage{stmaryrd}
\usepackage{psfrag}	
\usepackage{multirow}
\usepackage{float}
\usepackage{amsthm}

\newcommand{\subparagraph}{}
\usepackage[compact]{titlesec}
\usepackage[english]{babel}
\usepackage[process=auto]{pstool}
\usepackage{amsthm}
\theoremstyle{remark}
\newtheorem{remark}{Remark}
\usepackage{etoolbox}
\usepackage{algorithm}
\usepackage[process=auto]{pstool}
\usepackage{epsfig}
\usepackage{caption}
\usepackage{algorithmic}
\usepackage{xcolor}
\newtheorem{thm}{Theorem}
\newtheorem{lem}{Lemma}

\hoffset -3.1mm
\voffset -4mm
\textheight 25.2cm
\textwidth 19.1 cm
\begin{document}
	\captionsetup[figure]{labelformat={default},labelsep=period,name={Fig.}}
\title{Resource Allocation for Multi-User Downlink URLLC-OFDMA Systems \vspace*{-4mm}}
\author{ Walid R. Ghanem,
Vahid Jamali, Yan Sun, and Robert Schober  \\
Friedrich-Alexander-University Erlangen-Nuremberg, Germany \quad}
\maketitle \vspace*{-5mm}
\begin{abstract}
In this paper, we study resource allocation for downlink orthogonal frequency division multiple access (OFDMA) systems with the objective to enable ultra-reliable low latency communication (URLLC). To meet the stringent delay requirements of URLLC, the impact of short packet communication is taken into account for resource allocation algorithm design. In particular, the resource allocation design is formulated as a weighted system sum throughput maximization problem with quality-of-service (QoS) constraints for the URLLC users. The formulated problem is non-convex, and hence,  finding the global optimum solution entails a high computational complexity. Thus, an algorithm based on successive convex optimization is proposed to find a sub-optimal solution with polynomial time complexity. Simulation results confirm that the proposed algorithm facilities URLLC and significantly improves the system sum throughput compared to two benchmark schemes. 
\end{abstract}
\renewcommand{\baselinestretch}{0.7}
 \section{Introduction} 
 Ultra-reliable low latency communication (URLLC) is an essential component of the fifth-generation (5G) wireless networks. URLLC is required for mission critical applications, such as autonomous driving, e-health, factory automation, and tactile internet \cite{Toward}. URLLC imposes strict quality-of-service (QoS) requirements including a very low latency (e.g., $1\,$ms) and a low packet error probability (e.g., $10^{-6})$ \cite{Toward}. In addition, the data packet size is typically small, e.g., around 160~bits \cite{Popovski1}. Unfortunately, existing mobile communication systems cannot meet these requirements. For example, for the long term evolution system (LTE), the total frame time is 10$\,$ms, which already exceeds the total latency requirement of URLLC \cite{Mehdi1}. The main challenges for the design of URLLC systems are the two contradicting requirements of low latency and ultra high reliability. For this reason, new design strategies are needed to enable URLLC.
 
  The design of URLLC systems requires a short frame structure and a small packet size. For small packet sizes, the relation between the achievable rate, decoding error probability, and transmission delay cannot be captured by Shannon's capacity formula which assumes infinite block length and zero error probability \cite{shannon}. If Shannon's capacity formula is utilized for resource allocation design for URLLC, the latency will be underestimated and the reliability will be overestimated, and the QoS requirements cannot be met. To address this issue, the relevant performance limits for short packet communication (SPC) have to be considered \cite{thesis}. These performance limits provide a relation between the achievable rate, decoding error probability, and packet length. The pioneering work in \cite{strassen} investigated the limits of SPC for discrete memoryless channels, while the authors in \cite{Polyanskiy} extended this analysis to different types of channels, including the additive white Gaussian noise (AWGN) channel and the Gilbert-Elliot channel. SPC for parallel Gaussian channels was analyzed in \cite{thesis}, while in \cite{Erseghe1} an asymptotic analysis based on the Laplace integral was provided for the AWGN channel, parallel AWGN channels, and the binary symmetric channel (BSC). In \cite{Quasi}, the authors investigated the maximum achievable rate for SPC over quasi-static multiple-input multiple-output fading channels. 
  
     Modern communication systems employ multi-carrier transmission, e.g., orthogonal frequency division multiple access (OFDMA), due to its ability to exploit multi-user diversity and the flexibility in the allocation of resources, such as, e.g., power and bandwidth. Hence, future communication systems will combine the concepts of OFDMA and URLLC. To achieve high performance, the resource allocation in such URLLC-OFDMA systems has to be optimized. However, existing resource allocation designs for URLLC systems assume single-carrier transmission  \cite{optimal,convexfinite,chsecross}. In particular, the authors in \cite{optimal} investigated optimal power allocation for maximizing the average throughput in multi-user time division multiple access (TDMA) networks. The authors in \cite{convexfinite} studied energy efficient packet scheduling over quasi static fading channels. In  \cite{chsecross},
     a cross-layer framework based on the effective
     bandwidth was proposed for optimal resource allocation under QoS constraints. However, single-carrier systems suffer from a poor spectrum utilization, and require complex equalization at the receiver. On the other hand, existing resource allocation algorithms for OFDMA systems, such as those in \cite{wong,kwan1}, are based on Shannon's capacity theorem which assumes infinite block length. Therefore, for URLLC-OFDMA systems, the obtained resource allocation policies are invalid. To the best of the authors' knowledge, the resource allocation design for broadband OFDMA systems providing URLLC has not been investigated in the literature, yet. Motivated by the above discussion, this paper provides the following main contributions:  
   \begin{itemize}
   \item We propose a novel resource allocation algorithm design for multi-user URLLC-OFDMA systems. The resource allocation algorithm design is formulated as an optimization problem with the objective to maximize the weighted sum throughput subject to QoS constraints for all URLLC users. The QoS constraints include the minimum number of transmitted bits, the maximum packet error probability, and the maximum time for transmission of a packet (i.e., the maximum delay).
   \item The formulated problem is a non-convex mixed integer nonlinear optimization problem, and finding the global optimum solution requires high computational complexity. Therefore, a low-complexity sub-optimal algorithm is developed, which obtains a local optimal solution based on successive convex approximation.
   \item We show by simulation that the proposed scheme facilities URLLC, and achieves significant performance gains compared to two benchmark schemes.  
\end{itemize}
\textit{Notation}: In this paper, lower-case letters $x$ refer to scalar numbers, while bold lower-case letters $\mathbf{x}$ represent vectors. $\log(\cdot)$ is the logarithm with base 2. $(\cdot)^{T}$ denotes the transpose operator. 
$\mathbb{R}^{N \times 1}$ represents the set of all $N \times 1$ vectors with real valued entries. The circularly symmetric complex Gaussian distribution with mean $\mu$ and variance $\sigma^{2}$ is denoted by $\mathcal{CN}(\mu,\sigma^{2})$, and $\sim$ stands for "distributed as", $\mathcal{E}\{\cdot\}$ denotes statistical expectation. $\nabla_{x}f(\mathbf{x})$ denotes the gradient vector of function $f(\mathbf{x})$ and its elements are the partial derivatives of $f(\mathbf{x})$.
\break
\vspace*{-4mm}

\section{System and Channel Models}
In this section, we present the considered system and channel models.
\subsection{System Model}
We consider a single-cell downlink OFDMA system, where a base station (BS) serves $K$ URLLC users indexed by $k =\{1,\dots,K\}$, cf. Fig.~ \ref{model}(a). The entire frequency  band is divided into $M$ orthogonal subcarriers indexed by $m \in \{1,\dots,M\}$. We assume that a resource frame has a duration of $T_{\mathrm{f}}$ seconds, and consists of $N$ time slots which are indexed by $n \in \{1,\dots,N\}$. Thereby, one OFDMA symbol spans one time slot. Therefore, we have in total $M \times N$ resource elements. We assume that the delay requirement of each user is known at the BS and only users whose delay requirements can potentially be met in the current resource block are admitted into the system. The maximum transmit power of the BS is $P_{\text{max}}$.
 \begin{figure}[t]
\includegraphics[scale=0.31]{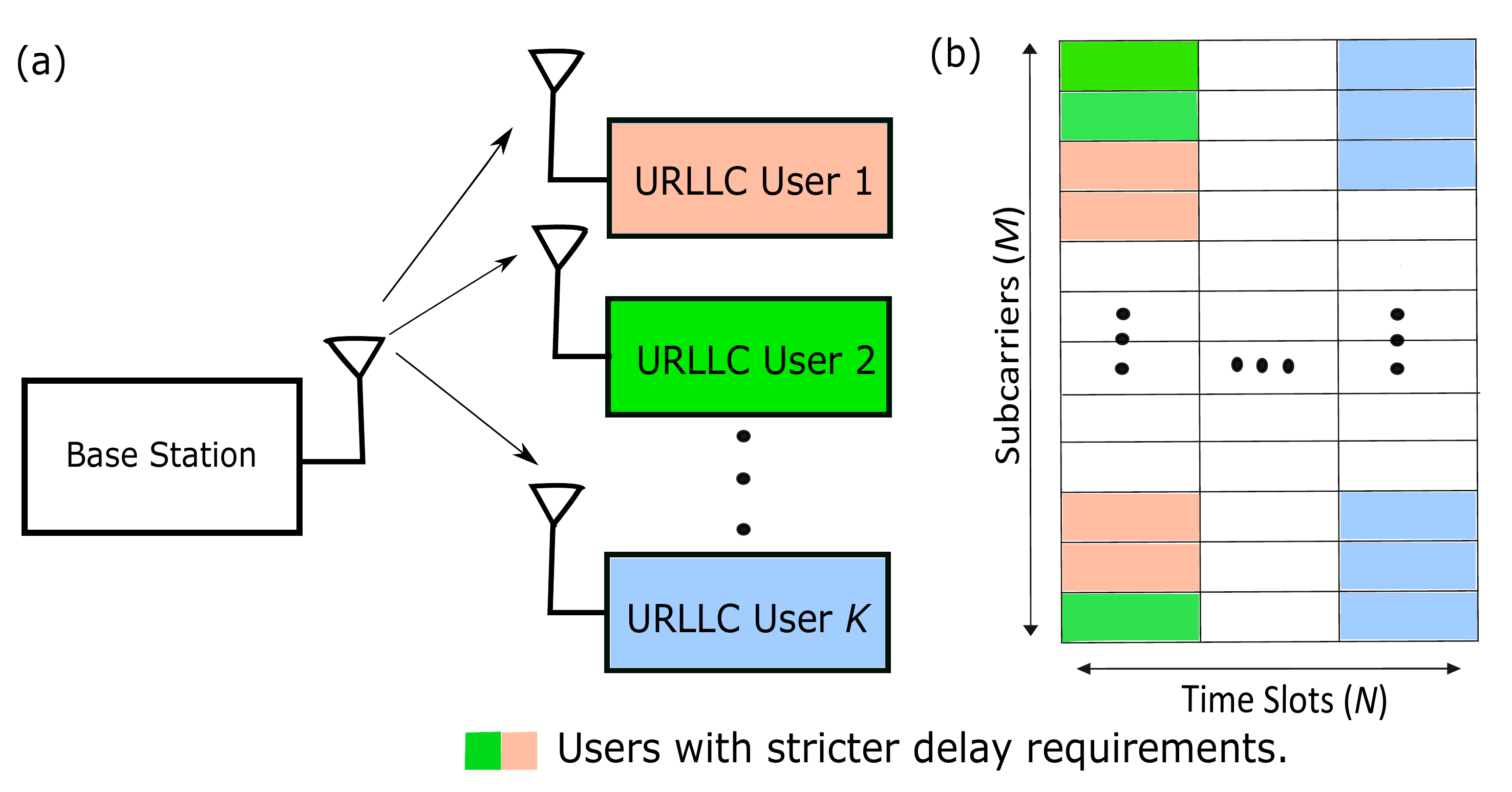}
  \centering
    \caption{ Multi-user downlink URLLC-OFDMA system: (a) System model with BS and $K$ users;    
      (b) Frame structure.}
    \label{model}
     \vspace{-3mm}
\end{figure}
\subsection{Channel Model}
 In this paper, we assume that the coherence time is larger than $T_{\mathrm{f}}$. Therefore, the channel gain of a given subcarrier remains constant for all $N$ time slots. The received signal at user $k$ on subcarrier $m$ in time slot $n$ is  modeled as \vspace*{-1mm}
\begin{IEEEeqnarray}{lll}
y^{(m)}_{k}[n]=h^{(m)}_{k}x^{(m)}_{k}[n]+z^{(m)}_{k}[n],
\end{IEEEeqnarray}
where $x^{(m)}_{k}[n]$ denotes the symbol transmitted by the BS for user $k$ on subcarrier $m$ in time slot $n$. Moreover, $z^{(m)}_{k}[n]\sim \mathcal{CN}(0,\sigma^{2})$ denotes complex Gaussian noise with zero mean and variance $\sigma^{2}$, and $h^{(m)}_{k}$ represents the complex channel coefficient between user $k$ and the BS on subcarrier $m$. Moreover, for future use, we define the signal-to-noise ratio (SNR) of user $k$ on subcarrier $m$ in time slot $n$ as \vspace*{-1mm}
\begin{IEEEeqnarray}{lll}
\gamma^{(m)}_{k}[n]=p^{(m)}_{k}[n]g^{(m)}_{k},
\end{IEEEeqnarray}
where $p^{(m)}_{k}[n]=\mathcal{E}\{|x^{(m)}_{k}[n]|^{2}\}$ is the power allocated to user $k$, and $g^{(m)}_{k}=\frac{|h^{(m)}_{k}|^{2}}{\sigma^{2}}$. Furthermore, to obtain a performance upper bound for  URLLC-OFDMA systems, perfect channel state information (CSI) is assumed to be available at the BS for resource allocation.   
\section{Resource Allocation Problem Formulation}
In this section, we discuss the achievable rate for SPC, the QoS of the users, and the adopted system performance metric for resource allocation design. Furthermore, we formulate the proposed resource allocation optimization problem for URLLC-OFDMA systems.
\subsection{Achievable Rate for SPC}
Shannon's capacity theorem considers the asymptotic case where the packet length approaches infinity and the decoding error probability goes to zero \cite{shannon}. Thus, it cannot be used for resource allocation design for URLLC systems, as URLLC systems have to employ short packets to achieve low latency. Furthermore, as a result of using short packets, decoding errors become unavoidable. For performance evaluation of SPC, the so-called normal approximation for finite blocklength codes was developed in \cite{thesis}. Mathematically, the maximum number of bits $B$ conveyed in a packet of $L$ symbols with error probability $\epsilon$ can be approximated as\vspace*{-2mm}\cite[Eq. (4.277)]{thesis},\cite[Fig. 1]{Erseghe1}
\begin{IEEEeqnarray}{lll}\label{normalapproximation}
B=\sum_{i=1}^{L}\log(1+\gamma_{i})-Q^{-1}(\epsilon)\sqrt{\sum_{i=1}^{L}V_{i}},
\end{IEEEeqnarray}
 where $V_{i}$ is the channel dispersion, which for the complex AWGN channel is obtained as \cite{thesis}\vspace*{-2mm}
\begin{IEEEeqnarray}{lll}\label{dispersion}
V_{i}=a^{2}\bigg(1-\frac{1}{(1+\gamma_{i})^2}\bigg).
\end{IEEEeqnarray} 
Here, $a=\log(\text{e})$, $Q^{-1}(\cdot)$ is the inverse of the Gaussian Q-function $Q(x)=\frac{1}{\sqrt{2\pi}}\int_{x}^{\infty}\text{exp}{\left(-\frac{t^{2}}{2}\right)}\text{d}t$, and $\gamma_{i}$ is the SNR of the $i^{th}$ received symbol. In this paper, the resource allocation algorithm design for downlink URLLC-OFDMA systems is based on (\ref{normalapproximation}). Each resource element carries one symbol, and by allocating several resource elements from the available $M \times N$ resource elements to a given user, the number of bits received by the user with packet error probability $\epsilon$ can be determined based on (\ref{normalapproximation}). 
\subsection{QoS and System Performance Metrics}
The QoS requirements for URLLC users comprise a minimum number of received bits, denoted by $B_{k}$, a target packet error probability denoted by $\epsilon_{k}$, and the number of time slots required for transmitting the user's packet, denoted by $D_{k}$. According to (\ref{normalapproximation}), the total number of bits transmitted over resources allocated to user $k$ can be written as:
\begin{IEEEeqnarray}{lll}\label{BT}
\Psi_{k}(\mathbf{p}_{k},\mathbf{s}_{k})=F_{k}(\mathbf{p}_{k},\mathbf{s}_{k})-V_{k}(\mathbf{p}_{k},\mathbf{s}_{k}),
\end{IEEEeqnarray}
where
\begin{eqnarray}
F_{k}(\mathbf{p}_{k},\mathbf{s}_{k})&=&\sum_{m=1}^{M}\sum_{n=1}^{N}s^{(m)}_{k}[n]\log(1+\gamma^{(m)}_{k}[n]), \\
V_{k}(\mathbf{p}_{k},\mathbf{s}_{k})&=&Q^{-1}(\epsilon_{k})\sqrt{\sum_{m=1}^M \sum_{n=1}^N s^{(m)}_{k}[n]V^{(m)}_{k}[n]},\\
V^{(m)}_{k}[n]&=&a^{2}\bigg(1-\frac{1}{(1+{\gamma^{(m)}_{k}[n]})^2}\bigg),
\end{eqnarray}
where $s^{(m)}_{k}[n]=\{0,1\}$ is a binary assignment indicator for resource element $[m,n]$. If subcarrier $m$ in time slot $n$ is assigned to user $k$, we have $s^{(m)}_{k}[n]=1$, otherwise $s^{(m)}_{k}[n]=~0$. Furthermore, we assume that each subcarrier is allocated to at most one user to avoid multiple access interference. $\mathbf{p}_{k}~{\in  \mathbb{R}^{MN \times 1}}$ and $\mathbf{s}_{k}~{\in  \mathbb{R}^{MN \times 1}}$ are the collections of optimization variables $p_{k}^{(m)}[n]$ $\forall m,n$ and $s_{k}^{(m)}[n]$ $\forall m,n$, for user $k$, respectively. Moreover, the delay requirement of user $k$ can be ensured by assigning all symbols of user $k$ to the first $D_{k}$ time slots. In other words, users requiring low latency are assigned resource elements at the beginning of the frame, cf. Fig.~\ref{model}(b).    
\begin{remark}
We note that a user can start decoding as soon as it has received all OFDMA symbols that contain its data, i.e., after $D_{k}$ time slots.  \end{remark}
The weighted sum throughput of the entire system is given by: \vspace*{-2mm}
\begin{IEEEeqnarray}{lll} \label{eq1}
U(\mathbf{p},\mathbf{s}) =\sum_{k=1}^{K}w_{k}\Psi_{k}(\mathbf{p}_{k},\mathbf{s}_{k}) =F(\mathbf{p},\mathbf{s})-V(\mathbf{p},\mathbf{s}), 
\end{IEEEeqnarray}
where
\begin{IEEEeqnarray}{lll}
\hspace*{-4mm} F(\mathbf{p},\mathbf{s})=\sum_{k=1}^{K}w_{k}F_{k}(\mathbf{p}_{k},\mathbf{s}_{k}), \hspace*{-2mm} \quad V(\mathbf{p},\mathbf{s})=\sum_{k=1}^{K}w_{k}V_{k}(\mathbf{p}_{k},\mathbf{s}_{k}).
\end{IEEEeqnarray}

 Here, $w_{k}$ are weights which can be used to control the fairness among the users. Moreover, $\mathbf{p} \in  \mathbb{R}^{KMN \times 1}$  and $\mathbf{s} \in  \{0,1\}^{KMN \times 1}$  are the collections of optimization variables $p_{k}^{(m)}[n]$, $\forall k,m,n$, and $s_{k}^{(m)}[n]$, $\forall k,m,n$, respectively. 
\subsection{Optimization Problem Formulation}
In the following, we formulate the resource allocation optimization problem for the maximization of the weighted sum throughput of the entire system while meeting the QoS requirements of each user regarding the received number of bits, the reliability, and the latency. In particular, the  power and subcarrier allocation policies are determined by solving the following optimization problem:
\begin{IEEEeqnarray}{lll}
\label{optimization1}
&& \hspace*{-0mm} \underset { {\mathbf {p}},\mathbf {s}}{ \mathop {\mathrm {maximize}}\nolimits } \,\,\  F(\mathbf{p},\mathbf{s})-V(\mathbf{p},\mathbf{s}) \\
\hspace*{-0mm}  \mbox {s.t.} 
&& \hspace*{-0mm} \nonumber \mbox {C1: }  F_{k}(\mathbf{p}_{k},\mathbf{s}_{k})-V_{k}(\mathbf{p}_{k},\mathbf{s}_{k}) \geq B_{k},  \forall k,  \\
&& \hspace*{-0mm} \nonumber  \mbox {C2: } p^{(m)}_{k}[n] \geq 0,  \forall k,m,n, \\  
&& \hspace*{-0mm} \nonumber  \mbox {C3: }  \sum_{k=1}^K \sum_{m=1}^{M}  \sum_{n=1}^{N}  s^{(m)}_{k}[n]p^{(m)}_{k}[n]  \leq  P_{\text{max}}, \\
&& \hspace*{-0mm} \nonumber \mbox {C4: } s^{(m)}_{k}[n] = \{0,1\}, \forall k,m,n, \quad \mbox {C5: } \sum_{k=1}^{K}s^{(m)}_{k}[n] \leq 1, \forall m,n, \\
&& \hspace*{-0mm} \nonumber \mbox {C6: } s^{(m)}_{k}[n]=0, \forall n>D_{k},  \forall k. 
\end{IEEEeqnarray}

 In (\ref{optimization1}), constraint $\mbox {C1}$ guarantees the transmission of a minimum number of $B_{k}$ bits to user $k$. Constraint $\mbox {C2}$ is the non-negative transmit power constraint. Constraint $\mbox {C3}$ is the total power budget constraint. Constraints $\mbox {C4}$ and $\mbox {C5}$ are imposed to ensure that each subcarrier in a given time slot is allocated to only one user. Finally, constraint $\mbox {C6}$ ensures that  user $k$ is served within $D_{k}$ time slots to meet its delay requirement. 
 
       The optimization problem in (\ref{optimization1}) is a  mixed integer non-convex optimization problem. The non-convexity is caused by the objective function, constraint $\mbox {C1}$, and the integer constraint for subcarrier allocation in $\mbox {C4}$. In general, mixed integer non-convex optimization problems are difficult to solve optimally in polynomial time. Hence, in the next section, we focus on developing a sub-optimal solution, where successive convex approximation is employed for computational efficiency.
\section{Resource Allocation Algorithm Design} 
In this section, we propose a low-complexity sub-optimal algorithm to solve problem (\ref{optimization1}). The proposed resource allocation algorithm design tackles the non-convexity of (\ref{optimization1}) in three main steps as outlined in the following.
\begin{itemize}
		\setlength{\itemsep}{1pt}
\item \textbf{Step 1\textendash Big-M formulation:} One reason for the non-convexity of (\ref{optimization1}) is the joint power and subcarrier allocation which introduces non-convex multiplicative terms of the form $s_{k}^{(m)}[n]p_{k}^{(m)}[n]$, e.g., constraint~$\mbox {C3}$. To address this issue, we employ the big-M formulation to relax these constraints by decomposing the multiplicative terms. 
\item \textbf{Step 2\textendash Integer relaxation:} We relax the integer constraint $\mbox {C4}$ by rewriting it in an equivalent form.  
\item \textbf{Step 3\textendash Difference of convex programming:} Difference of convex (DC) programming will be used to convexify the non-convex objective function and non-convex constraint $\mbox {C1}$. The resulting problem can be solved using standard convex optimization solvers, e.g., CVX \cite{cvx}, iteratively.  
\end{itemize}
 \textbf{Step 1:} To solve (\ref{optimization1}) efficiently, we introduce a new variable as follows:
\begin{IEEEeqnarray}{lll}\label{product}
\bar{p}_{k}^{(m)}[n]=s_{k}^{(m)}[n]p_{k}^{(m)}[n]. 
\end{IEEEeqnarray}
Therefore, the optimization problem in (\ref{optimization1}) can be rewritten in the following equivalent form:
\begin{IEEEeqnarray}{lll}\label{optimization2}
& \quad \underset { {\bar{\mathbf{p}}},\mathbf {s}}{ \mathop {\mathrm {maximize}}\nolimits }~ \bar{F}(\bar{\mathbf{p}})-\bar{V}(\bar{\mathbf{p}})\\
& \nonumber\mbox {s.t.}~~\mbox {C1:} \ \bar{F}_{k}(\bar{\mathbf{p}}_{k}) \hspace*{-0.5mm} - \hspace*{-0.5mm} \bar{V}_{k}(\bar{\mathbf{p}}_{k}) \hspace*{-0.5mm} \geq \hspace*{-0.5mm} B_{\textit{k}}, \forall k,  \quad \mbox {C2:} 
 \ \bar{p}^{(m)}_{k}[n] \hspace*{-0.5mm} \geq \hspace*{-0.5mm} 0,  \forall k,m,n, \\
&\qquad \nonumber \mbox {C3:} \ \sum_{k=1}^K \sum_{m=1}^{M}\sum_{n=1}^{N} \bar{p}^{(m)}_{k}[n]\leq P_{\text{max}}, \quad \mbox {C4},\mbox {C5},\mbox {C6}, 
\end{IEEEeqnarray}
 where
\begin{eqnarray}
\bar{F}(\bar{\mathbf{p}}) &=& \sum_{k=1}^{K}w_{k}\bar{F}_{k}(\bar{\mathbf{p}}_{k})\ , \quad \bar{V}(\bar{\mathbf{p}}) = \sum_{k=1}^{K}w_{k}\bar{V}_{k}(\bar{\mathbf{p}}_{k}), \\[-1mm]
\bar{F}_{k}(\bar{\mathbf{p}}_{k})&=&\sum_{m=1}^{M}\sum_{n=1}^{N}\log(1+\bar{p}^{(m)}_{k}[n]g_{k}^{(m)}),\\[-1mm]
\bar{V}_{k}(\bar{\mathbf{p}}_{k})&=&Q^{-1}(\epsilon_{k})\sqrt{\sum_{m=1}^M \sum_{n=1}^N \bar{V}^{(m)}_{k}[n]},\\[-1mm]
\bar{V}^{(m)}_{k}[n]&=&a^{2}\bigg(1-\frac{1}{(1+{\bar{p}^{(m)}_{k}[n]g_{k}^{(m)})}^2}\bigg).
\end{eqnarray} 
%
Here, $\bar{\mathbf{p}} \in  \mathbb{R}^{KMN \times 1}$ is the collection of the new variables ${\bar{p}}_{k}^{(m)}[n]$, $\forall k,m,n$. Since optimization problems (\ref{optimization1}) and (\ref{optimization2}) are equivalent, we will focus on solving (\ref{optimization2}). To design an efficient algorithm, we employ the big-M formulation\footnote{ For more details on the big M-formulation,
	 please refer to \cite[Section~2.3]{Leemixed}.} to decompose the product terms in (\ref{product}) \cite{yan}. Specifically, we impose the following additional constraints: 
\begin{IEEEeqnarray}{lll} &\hspace{-4mm} \mbox {C7}: \bar{p}^{(m)}_{k}[n]\leq P_{\text{max}} s^{(m)}_{k}[n], \ \ \  \forall k,m,n, \\\ & \hspace{-4mm} \mbox {C8}: \bar{p}^{(m)}_{k}[n]\leq p_{k}^{(m)}[n], \ \ \ \forall k,m,n,\\ & \hspace{-4mm} \mbox {C9}: \bar{p}_{k}^{(m)}[n]\geq p_{k}^{(m)}[n] -(1-s_{k}^{(m)}[n])P_{\text{max}}, \   \forall k,m,n,\quad \hspace{-4mm} \\ & \hspace{-4mm}\mbox {C10}: \bar{p}_{k}^{(m)}[n]\geq 0, \ \ \forall k,m,n.\end{IEEEeqnarray}
Note that constraints $\mbox {C7}-\mbox {C10}$ do not change the feasible set. 

 \textbf{Step 2:} The integer constraint $\mbox {C4}$ in (\ref{optimization2}) is a non-convex constraint. Thus, we rewrite constraint $\mbox {C4}$ in the following equivalent form: 
\begin{eqnarray}\hspace{-4mm}
&&\hspace*{-6mm} \mbox {C4a}: \sum_{k=1}^{K}\sum_{m=1}^{M}\sum_{n=1}^{N}s_{k}^{(m)}[n]-\sum_{k=1}^{K}\sum_{m=1}^{M}\sum_{n=1}^{N}({s_{k}^{(m)}}[n])^{2}\leq 0, \\
&&\hspace*{-6mm} \mbox {C4b}: 0 \leq s_{k}^{(m)}[n]\leq 1,   \ \ \  \forall m,n,k.
\end{eqnarray}
To facilitate the presentation, we rewrite constraint $\mbox {C4a}$ as follows
\begin{IEEEeqnarray}{lll}
\mbox {C4a}: W(\mathbf {s})-E(\mathbf {s})\leq 0,
\end{IEEEeqnarray}
where 
\begin{eqnarray}
W(\mathbf{s})\hspace*{-0.5mm}=\hspace*{-0.5mm}\sum_{k=1}^{K}\hspace*{-0.5mm}\sum_{m=1}^{M}\hspace*{-0.5mm}\sum_{n=1}^{N} \hspace*{-0.5mm} s_{k}^{(m)}[n], \quad E(\mathbf{s}) \hspace*{-0.5mm} = \hspace*{-0.5mm} \sum_{k=1}^{K}\hspace*{-0.5mm}\sum_{n=1}^{M}\hspace*{-0.5mm}\sum_{n=1}^{N}({s_{k}^{(m)}}[n])^{2}.
\end{eqnarray}
Now, the optimization variables $s_{k}^{(m)}[n]$ are continuous values between zero and one. Optimization problem (\ref{optimization2}) can be reformulated in the following equivalent form: 
\begin{IEEEeqnarray}{lll}\label{eeq17}&\underset { {\bar{\mathbf{p}}},\mathbf{p},\mathbf {s}}{ \mathop {\mathrm {minimize}}\nolimits }~ -\bar{F}(\bar{\mathbf{p}})+\bar{V}(\bar{\mathbf{p}}), \\ \nonumber &\mbox {s.t.}\qquad \mbox {C1}-\mbox {C3},\mbox {C4a},\mbox {C4b}, \mbox {C5}-\mbox {C10}. \end{IEEEeqnarray}
However, constraint $\mbox {C4a}$ is still a non-convex constraint because it is a difference of two convex functions\cite{kwan1,yan,Joinoptimization}. 
 In order to deal with this constraint, we introduce the following theorem.
\begin{thm}
Assuming a large value of $\beta\gg1$, the optimization problem in (\ref{eeq17}) is equivalent to the following optimization problem:
\begin{IEEEeqnarray}{lll}\label{eeq16}&\underset { {\bar{\mathbf{p}}},\mathbf{p},\mathbf {s}}{ \mathop {\mathrm {minimize}}\nolimits }~ -\bar{F}(\bar{\mathbf{p}})+\bar{V}(\bar{\mathbf{p}})+\beta\left(W(\mathbf {s})-E(\mathbf {s})\right), \\&\mbox {s.t.}~~ \nonumber\mbox {C1:}\ \bar{F}_{k}(\mathbf{\bar{p}}_{k})-\bar{V}_{k}(\mathbf{\bar{p}}_{k})\geq B_{k}, \  \forall \ k, \\&\qquad \nonumber \mbox {C2},\mbox {C3},\mbox {C4b},\mbox {C5}-\mbox {C10}. \end{IEEEeqnarray}
\end{thm}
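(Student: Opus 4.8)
The plan is to prove Theorem~1 as an \emph{exact penalty} result. I would first record two elementary facts about the penalty term
\begin{IEEEeqnarray}{lll}
g(\mathbf{s}):=W(\mathbf{s})-E(\mathbf{s})=\sum_{k=1}^{K}\sum_{m=1}^{M}\sum_{n=1}^{N}s_{k}^{(m)}[n]\big(1-s_{k}^{(m)}[n]\big).
\end{IEEEeqnarray}
Under C4b one has $0\le s_{k}^{(m)}[n]\le 1$, hence $s_{k}^{(m)}[n]\big(1-s_{k}^{(m)}[n]\big)\ge 0$ for every index, so $g(\mathbf{s})\ge 0$ with equality \emph{iff} every $s_{k}^{(m)}[n]\in\{0,1\}$; thus, on the common feasible region, C4a (i.e.\ $g(\mathbf{s})\le 0$) holds exactly when $g(\mathbf{s})=0$, that is, exactly when $\mathbf{s}$ is integer. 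Second, I would check that the feasible set $\mathcal{F}$ of (\ref{eeq16}) (the constraints C1--C3, C4b, C5--C10) is compact: it is closed because $\bar{F}_k,\bar{V}_k,W,E$ and the affine constraint functions are continuous, and it is bounded because C2, C3, C10 confine $\bar{\mathbf p}$ to a simplex of radius $P_{\mathrm{max}}$, C4b confines $\mathbf s$ to $[0,1]^{KMN}$, and C7--C9 together with these bounds force $0\le p_k^{(m)}[n]\le 2P_{\mathrm{max}}$. Denote by $\mathcal{F}_{0}=\{\,x\in\mathcal{F}: g(\mathbf s)=0\,\}$ the feasible set of (\ref{eeq17}); both sets are assumed nonempty, and $f(\bar{\mathbf p}):=-\bar F(\bar{\mathbf p})+\bar V(\bar{\mathbf p})$ is continuous on $\mathcal{F}$.

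With this in place the equivalence is shown in two directions. Any point feasible for (\ref{eeq17}) lies in $\mathcal{F}_0\subseteq\mathcal{F}$ and has $g(\mathbf s)=0$, so it is feasible for (\ref{eeq16}) with identical objective value; hence the optimal value of (\ref{eeq16}) never exceeds that of (\ref{eeq17}), for any $\beta>0$. The substantive direction is to produce a finite threshold $\beta_0$ such that, for every $\beta\ge\beta_0$, every minimizer of (\ref{eeq16}) satisfies $g(\mathbf s)=0$ and therefore lies in $\mathcal{F}_0$ --- once this holds, the penalty term vanishes at the optimum, the two objectives coincide there, and (using the first direction) the minimizer is also optimal for (\ref{eeq17}), giving the stated equivalence. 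To get $\beta_0$ I would either (i) invoke the classical exact-penalty theorem for programs with compact feasible region and continuous data, whose hypotheses are exactly the facts established above and which is in the spirit of the references already used for the big-M / DC steps, e.g.\ \cite[Section~2.3]{Leemixed}, \cite{yan}; or (ii) argue directly: bound $f$ by a Lipschitz constant $L$ on $\mathcal{F}$, establish an error bound $g(\mathbf s)\ge \kappa\,\mathrm{dist}(x,\mathcal{F}_0)$ for some $\kappa>0$ near $\mathcal{F}_0$ (obtained by rounding $\mathbf s$ to its nearest integer point and correspondingly adjusting $\bar{\mathbf p},\mathbf p$ so as to remain feasible, with the adjustment of size $O(g(\mathbf s))$), and conclude that $f+\beta g$ is strictly larger at any $x\notin\mathcal{F}_0$ than at its projection onto $\mathcal{F}_0$ whenever $\beta>L/\kappa$.

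The main obstacle is making step (ii) fully rigorous: one must rule out that, as $\beta\to\infty$, the minimizers of (\ref{eeq16}) drift toward $\mathcal{F}_0$ while $\beta\,g(\mathbf s)$ stays bounded, which requires controlling how a fractional $\mathbf s$ can be rounded to an integer assignment without violating the coupling constraints C1, C5 and C7--C9 and with only an $O(g(\mathbf s))$ change in the objective. If one prefers to sidestep this, a slightly weaker statement already suffices for the algorithmic purpose: by compactness of $\mathcal{F}$ and continuity of $f$ and $g$, the optimal value of (\ref{eeq16}) converges to that of (\ref{eeq17}) as $\beta\to\infty$ and every accumulation point of its minimizers is optimal for (\ref{eeq17}) --- which is what ``$\beta\gg 1$'' in the theorem statement is intended to capture.
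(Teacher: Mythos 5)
Your route is genuinely different from the paper's. You aim at a direct exact-penalty argument: nonnegativity of the penalty $W(\mathbf{s})-E(\mathbf{s})$ under C4b with equality exactly at binary $\mathbf{s}$, compactness of the relaxed feasible set, the easy inclusion showing that the optimal value of (\ref{eeq16}) never exceeds that of (\ref{eeq17}), and then a finite threshold $\beta_{0}$ beyond which every minimizer of (\ref{eeq16}) is binary. The paper instead treats $\beta$ as a Lagrange multiplier for C4a: it forms the Lagrangian, invokes weak duality, rules out $W(\mathbf{s})-E(\mathbf{s})>0$ at the dual solution (otherwise the dual value would be infinite, contradicting its finite upper bound $U^{*}$), concludes strong duality, and uses monotonicity of the dual function $\Theta(\beta)$ to argue that every $\beta\geq\beta_{0}$ is an optimal multiplier. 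Your preliminary observations coincide with the paper's, but the machinery you propose for the decisive step is different.

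That decisive step is also where your proposal has a genuine gap, which you yourself concede. Route (i) does not go through as stated: compactness of the feasible region and continuity of the data are \emph{not} sufficient hypotheses for exact penalization with a finite $\beta$. For instance, minimizing $-x$ over $[0,1]$ subject to $x^{2}\leq 0$ has optimum $0$ at $x=0$, yet the penalized problem $\min_{x\in[0,1]}\,-x+\beta x^{2}$ has the strictly infeasible minimizer $x=1/(2\beta)$ and a strictly smaller value for every finite $\beta$; what such theorems additionally require is a Lipschitz bound on the objective together with an error bound (calmness) relating the penalty to the distance from the feasible set of (\ref{eeq17}). In the present problem that error bound is exactly the rounding-and-repair argument you flag as the obstacle in route (ii): although $s(1-s)$ does bound the distance of $\mathbf{s}$ to $\{0,1\}^{KMN}$ linearly, rounding an entry $s_{k}^{(m)}[n]$ down can destroy C1 for user $k$, rounding it up can violate C5 or C3 once $\bar{\mathbf{p}},\mathbf{p}$ are adjusted through C7--C9, and nothing guarantees that a feasible binary point exists within $O\big(W(\mathbf{s})-E(\mathbf{s})\big)$ of the current iterate in the joint variable $(\bar{\mathbf{p}},\mathbf{p},\mathbf{s})$. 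Your fallback statement (convergence of optimal values and of accumulation points of minimizers as $\beta\to\infty$) is correct but strictly weaker than Theorem~1, which asserts equivalence for a fixed, finite, sufficiently large $\beta$. So, as written, the proposal does not prove the theorem; the paper's Lagrangian-duality argument is precisely the device it uses to bypass this error-bound/rounding issue.
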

\vspace{-3mm}
\begin{proof}
Please refer to Appendix~ A.
\end{proof}
 Constant $\beta$ in (\ref{eeq16}) is a large constant which is used as a penalty factor to penalize the objective function for any value of $s_{k}^{(m)}[n]$ that is not equal to $0$ or $1$. \\
    \textbf{Step 3:}  The optimization problem in (\ref{eeq16}) is still  non-convex because of the objective function and constraint $\mbox {C1}$. First, we analyze the convexity of the objective function in the following lemma.
\begin{lem}
The objective function is the difference of two convex functions.
\end{lem}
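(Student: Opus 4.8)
The plan is to give an explicit difference-of-convex (DC) decomposition of the objective of (\ref{eeq16}) and then verify convexity of the two pieces. I would write the objective as
\[
\bigl(-\bar{F}(\bar{\mathbf{p}})+\beta W(\mathbf{s})\bigr)-\bigl(-\bar{V}(\bar{\mathbf{p}})+\beta E(\mathbf{s})\bigr)=:g_{1}(\bar{\mathbf{p}},\mathbf{s})-g_{2}(\bar{\mathbf{p}},\mathbf{s}),
\]
so that it suffices to show $g_{1}$ and $g_{2}$ are both convex. This grouping is the natural one because $\bar{F}$ and $\bar{V}$ turn out to be concave while $W$ (linear) and $E$ (a sum of squares) are convex, so attaching the ``concave-type'' terms with the correct sign produces two genuinely convex functions.

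For $g_{1}$: each term of $\bar{F}_{k}$ is $\log(1+\bar{p}^{(m)}_{k}[n]g^{(m)}_{k})$, i.e.\ the concave map $t\mapsto\log(1+t)$ precomposed with the affine map $\bar{p}^{(m)}_{k}[n]\mapsto \bar{p}^{(m)}_{k}[n]g^{(m)}_{k}$, hence concave; summing over $m,n,k$ with $w_{k}\geq 0$ keeps $\bar{F}$ concave, so $-\bar{F}$ is convex. Since $W(\mathbf{s})=\sum_{k,m,n}s^{(m)}_{k}[n]$ is linear and $\beta>0$, $g_{1}$ is a sum of convex functions and therefore convex.

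For $g_{2}$: I would first establish that $\bar{V}$ is concave. Each $\bar{V}^{(m)}_{k}[n]=a^{2}\bigl(1-(1+\bar{p}^{(m)}_{k}[n]g^{(m)}_{k})^{-2}\bigr)$ is concave in $\bar{p}^{(m)}_{k}[n]$, because $t\mapsto(1+t)^{-2}$ is convex on $[0,\infty)$ and precomposition with an affine map preserves convexity; hence $\sum_{m,n}\bar{V}^{(m)}_{k}[n]$ is a nonnegative concave function. Composing it with the concave, nondecreasing map $t\mapsto\sqrt{t}$ preserves concavity, so $\bar{V}_{k}$ is concave, and scaling by $Q^{-1}(\epsilon_{k})\geq 0$ and $w_{k}\geq 0$ and summing over $k$ shows $\bar{V}$ is concave, i.e.\ $-\bar{V}$ is convex. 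Finally $E(\mathbf{s})=\sum_{k,m,n}(s^{(m)}_{k}[n])^{2}$ is a sum of convex quadratics, so $\beta E$ is convex; thus $g_{2}=-\bar{V}+\beta E$ is convex as well, which proves the lemma.

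The only real obstacle is the concavity of $\bar{V}$: the elementwise concavity of the dispersion terms $\bar{V}^{(m)}_{k}[n]$ must be combined carefully with the square-root composition rule (square root is concave and nondecreasing, so $h$ concave implies $\sqrt{h}$ concave), and one must observe that $Q^{-1}(\epsilon_{k})$ is nonnegative for the practically relevant reliability targets $\epsilon_{k}\leq 1/2$. Everything else reduces to the standard facts that nonnegative combinations and affine precompositions of convex functions are convex.
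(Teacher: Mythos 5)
Your proposal is correct and follows essentially the same route as the paper's Appendix~B: you establish concavity of $\bar{F}$ and $\bar{V}$ (the latter via concavity of each dispersion term $\bar{V}^{(m)}_{k}[n]$, the square-root composition rule, and nonnegativity of $Q^{-1}(\epsilon_{k})$ for $\epsilon_{k}<1/2$) together with convexity of $W$ and $E$, which is exactly the paper's argument; you merely make the grouping $(-\bar{F}+\beta W)-(-\bar{V}+\beta E)$ explicit and verify concavity of $\bar{V}^{(m)}_{k}[n]$ via convexity of $t\mapsto(1+t)^{-2}$ rather than by writing out the second derivative, which is an equivalent computation.
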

\begin{proof}
Please  refer to Appendix~ B.
\end{proof}
      The optimization problem in (\ref{eeq16}) belongs to the class of DC programming problems, since its objective function can be written as the difference of two convex functions and constraint $\mbox {C1}$ can be expressed as the difference of two concave functions. We can obtain a first order approximation for convex function $E(\mathbf{s})$ and concave function $\bar{V}_{k}(\mathbf{\bar{p}}_{k})$ via Taylor series as follows
\begin{eqnarray}
\label{inequalit2}
E(\mathbf{s}) &\geq & E(\mathbf{s}^{({j})})+ \nabla_{\mathbf{s}}E(\mathbf{s}^{({j})})^{T}(\mathbf{s}-\mathbf{s}^{({j})}),\,\,\ \text{and} \\
\label{inequalit3}
\bar{V}_{k}(\mathbf{\bar{p}}_{k}) &\leq & \bar{V}_{k}(\mathbf{\bar{p}}_{k}^{({j})})+ \nabla_{\mathbf{\bar{p}}_{k}}\bar{V}_{k}(\mathbf{\bar{p}}_{k}^{({j})})^{T}(\mathbf{\bar{p}}_{k}-\mathbf{\bar{p}}_{k}^{({j})}),
\end{eqnarray}
where $\mathbf{s}^{({j})}$ and $\mathbf{\bar{p}}_{k}^{({j})}$ are initial points, and
\begin{IEEEeqnarray}{lll}
 \nabla_{\mathbf{s}}E(\mathbf{s}^{({j})})^{T}(\mathbf{s}-\mathbf{s}^{({j})}) \nonumber \\ =\sum_{k=1}^{K}\sum_{m=1}^{M}\sum_{n=1}^{N}2s^{(m)({j})}_{k}[n]\left( s^{(m)}_{k}[n]-s^{(m)({j})}_{k}[n]\right),
 \end{IEEEeqnarray}
and
\begin{eqnarray}\nonumber \nabla_{\mathbf{\bar{p}}_{k}}\bar{V}_{k}(\mathbf{\bar{p}}_{k}^{({j})})
= \frac{a^{2} 
 Q^{-1}(\epsilon_{k})}{\sqrt{\sum_{m=1}^{M}\sum_{n=1}^{N} \bar{V}^{(m)}_{k}[n]}}\begin{pmatrix} 
  \ \frac{g_{k}^{(1)}}{(1+\bar{p}_{k}^{(1)}[1]g_{k}^{(1)})^{3}} \\
 \ \frac{g_{k}^{(2)}}{(1+\bar{p}_{k}^{(2)}[1]g_{k}^{(2)})^{3}} \\
  \vdots \\
  \frac{g_{k}^{(M)}}{(1+\bar{p}_{k}^{(M)}[N]g_{k}^{(M)})^{3}} \end{pmatrix}.
\end{eqnarray}
The right hand sides of (\ref{inequalit2}) and (\ref{inequalit3}) are affine functions. By substituting (\ref{inequalit2}) and (\ref{inequalit3}) into (\ref{eeq16}), we obtain the following convex optimization problem 
\begin{IEEEeqnarray}{lll} \label{eeeq16}
\underset { {\bar{\mathbf{p}}},\mathbf{p},\mathbf {s}}{ \mathop {\mathrm {minimize}}\nolimits }~ -\bar{F}(\bar{\mathbf{p}})+ \bar{V}(\bar{\mathbf{p}}^{({j})})+ \nabla_{\bar{\mathbf{p}}}\bar{V}(\bar{\mathbf{p}}^{({j})})^{T}(\bar{\mathbf{p}}-\bar{\mathbf{p}}^{({j})}) \quad \\  +\nonumber\beta\left(W(\mathbf{s})-E(\mathbf{s}^{({j})})- \nabla_{\mathbf{s}}E(\mathbf{s}^{({j})})^{T}(\mathbf{s}-\mathbf{s}^{({j})})\right), \\\mbox {s.t.}~~\nonumber\mbox {C1:} \ \bar{F}_{k}(\bar{\mathbf{p}})-\bar{V}_{k}(\bar{\mathbf{p}}^{({j})}) - \nonumber\nabla_{\bar{\mathbf{p}}}\bar{V}_{k}(\bar{\mathbf{p}}^{({j})})^{T}(\bar{\mathbf{p}}-\bar{\mathbf{p}}^{({j})})\geq B_{k}, \  \forall  k, \\\qquad \mbox {C2},\mbox {C3},\mbox {C4b}, \nonumber\mbox {C5}-\mbox {C10}. \end{IEEEeqnarray}
\par 
The optimization problem in (\ref{eeeq16}) is convex because the objective function is convex and the constraints span a convex set. Therefore, it can be efficiently solved by standard convex optimization solvers such as CVX \cite{cvx}. Algorithm 1 summarizes the main steps to solve (\ref{eeq16}) in an iterative manner, where the solution of (\ref{eeeq16}) in iteration $({j})$ is used as the initial point for the next iteration $({j}+1)$. The algorithm produces a sequence of improved feasible solutions until convergence to a local optimum point of problem (\ref{eeq16}) or equivalently problem (\ref{optimization1}) in polynomial time \cite{yan,Joinoptimization}.
\begin{algorithm}[t]
\caption{Successive Convex Approximation }
1:  \textbf{Initialize:} The maximum number of iterations $J_{\text{max}}$, iteration index $j=1$, penalty factor $\beta\gg1$, and  initial points $\mathbf{\bar{p}}^{(1)}$ and $\mathbf{{s}}^{(1)}$.\\
2: \textbf{Repeat}\\
3: Solve convex problem (\ref{eeeq16}) for a given $\mathbf{\bar{p}}^{({j})}$, $\mathbf{s}^{({j})}$\\
4: Set ${j}={j}+1$ and update $\mathbf{\bar{p}}^{({j}+1)}=\mathbf{\bar{p}}^{({j})}$, $\mathbf{{s}}^{({j}+1)}=\mathbf{{s}}^{({j})}.$ \\
5: \textbf{Until} ${j}=J_{\text{max}}$\\
6: \textbf{Return:} $\mathbf{\bar{p}}^{*}=\mathbf{\bar{p}}^{({J}_{\text{max}})}$, $\mathbf{s}^{*}=\mathbf{s}^{({J}_{{\text{max}}})}$.  
\label{sco2}
\end{algorithm}
\section{Performance Evaluation}
\ In this section, we provide simulation results to evaluate the effectiveness of the proposed resource allocation design for URLLC-OFDMA systems. The adopted simulation parameters are given in Table I, unless specified otherwise. In our simulations, the BS is located at the center of a cell. We consider the worst-case scenario where the users are located at the edge of the cell. For simplicity the user weights are set to $w_{k}=1,  \forall k=1,\dots,K$. The path loss is calculated as  $35.3 + 37.6 \log_{10}(d_{k})$\cite{chsecross}, where $d_{k}$ is the distance from the base station to user $k$. The subcarrier gains follow a Rayleigh distribution. The penalty factor in (\ref{eeeq16}) is set to $\beta=10 \log(1+\frac{P_{\text{max}}}{\sigma^{2}})$. All simulation results are averaged over $100$ realizations of the mutlipath fading. Moreover, the proposed algorithm converges to a local optimum point after a few iterations. We use ${J}_{{\text{max}}}=5$ iterations for our simulations.
\subsection{Performance Metric}
To evaluate the performance of the system, we define the sum throughput of the system for a given channel realization as follows:
\begin{IEEEeqnarray}{lll} \label{average}
\bar{R} = \begin{cases}\frac{1}{MN}\sum_{k=1}^{K} \Psi_{k}(\mathbf{p}_{k},\mathbf{s}_{k}),  \quad  &\text{if} \,\,   (\mathbf{p},\mathbf{s}) \,\, \text{is feasbile} \\ 0  \qquad & \text{otherwise.} \end{cases}
\end{IEEEeqnarray}
 If the optimization problem is infeasible for a given channel realization, we set the corresponding sum throughput to zero. The average system sum throughput is obtained by averaging $\bar{R}$ over all considered channel realizations.   
\subsection{Performance Bound and Benchmark Schemes}
\par  We compare the performance of the proposed resource allocation algorithm design with an upper bound and two benchmark schemes:
\begin{itemize}
	\setlength{\itemsep}{1pt}
 \item {\textbf{Upper bound}}: In this scheme, Shannon's capacity is used for optimization in (\ref{optimization1}), i.e., $V(\mathbf{p},\mathbf{s})$ and $V_{k}(\mathbf{p}_{k},\mathbf{s}_{k})$ are set to zero in the objective function and constraint $\mbox {C1}$, respectively, but all other constraints are retained. Successive convex optimization is used to solve the resulting new optimization problem. This scheme provides an (unachievable) upper bound for the average system sum throughput of the network.
     \item {\textbf{Benchmark scheme 1}}: In this case, the solution obtained from the upper bound is applied in (\ref{average}) which uses the normal approximation in (\ref{BT}) to compute the average sum throughput, i.e., Shannon's capacity is used for resource allocation design but the normal approximation is used for performance evaluation.     
  \item {\textbf{Benchmark scheme 2}}: In this scheme, we fix the transmit power of every subcarrier as $p_{k}^{(m)}[n]=\frac{P_{\text{max}}}{MN}$, i.e., equal power allocation is used. Therefore, the problem in (\ref{optimization1}) is reduced to a subcarrier assignment problem which is solved using successive convex optimization.
\end{itemize}

\begin{table}[t]
\centering
\caption{System parameters.} \vspace*{-2mm}
\label{tab:table}
\renewcommand{\arraystretch}{1.4}
\scalebox{0.7}{%
\begin{tabular}{|c||c|} 
        \hline
        Cell radius & 250 meters \\ \hline
        Number and bandwidth of subcarriers & 64 and 15 kHz \\ \hline
        Noise power density  & -174 dBm/Hz \\ \hline
Number of bits per packet  & 160 bits \\ \hline  
 Maximum base station transmit power $P_{\text{max}}$  &  $45$~dBm \\ \hline  
Number of iterations ${J}_{\text{max}}$ for Algorithm 1 &  5  \\ \hline  
\end{tabular}}
\vspace*{-5mm}
\end{table} 
\vspace*{-1mm}
  \subsection{Simulation Results}
   In Fig.~\ref{changepower}, we show the average sum throughput versus the maximum transmit power at the BS, $P_{\text{max}}$. The maximum packet error probability of all users is set to $\epsilon_{k}=10^{-7} \, \forall k$. There are $K=4$ users in the system, and the number of time slots is $N=6$. We assume that $D_{1}=2$, $D_{2}=3$, and $D_{3}=D_{4}=6$, i.e., the first and the second users require smaller delays than the remaining users. As can be observed, the average system sum throughput improves with the maximum transmit power $P_{\text{max}}$ because the SNR of all users is increased. Below a certain value of $P_{\text{max}}$, for each of the considered schemes, the average sum throughput is very small (almost zero). This is due to the strict QoS requirements of the URLLC users which cannot be met for small $P_{\text{max}}$, i.e., the resource allocation policy is infeasible. Furthermore, the value of $P_{\text{max}}$, below which the average sum throughput is almost zero, is $25$~dBm for the proposed scheme, whereas it is $27$~dBm and $30$~dBm for benchmark schemes 2 and 1, respectively. The upper bound requires less power for feasibility than the other schemes, because, for the upper bound, we set  $V_{k}(\mathbf{p}_{k},\mathbf{s}_{k}), \forall k$ to zero in $\mbox{C1}$. Benchmark schemes 1 and 2 require more power for feasibility than the proposed scheme, as the equal power allocation used for benchmark scheme 2 is not optimal, while for benchmark scheme 1, the resource allocation policies $ {\mathbf {p}}$ and $\mathbf {s}$ are determined based on Shannon's capacity formula which may violate constraint $\mbox{C1}$ in (\ref{optimization1}), especially for small $P_{\text{max}}$. Therefore, Shannon's capacity cannot be used for the design of URLLC-OFDMA systems, especially for low-to-medium $P_{\text{max}}$, since the QoS requirements cannot be guaranteed. For high $P_{\text{max}}$, for the proposed scheme, all non-zero $p_{k}^{(m)}[n]$ assume large values. Hence, the corresponding $\gamma^{(m)}_{k}[n]$ in (6)-(8) are large and $V_{k}(\mathbf{p}_{k},\mathbf{s}_{k})$ becomes negligible compared to $F_{k}(\mathbf{p}_{k},\mathbf{s}_{k})$. Therefore, in this case, benchmark scheme 1, which assumes $V_{k}(\mathbf{p}_{k},\mathbf{s}_{k})$ is zero, yields a similar performance as the proposed scheme. 
\begin{figure}[t]
\includegraphics[scale=0.5]{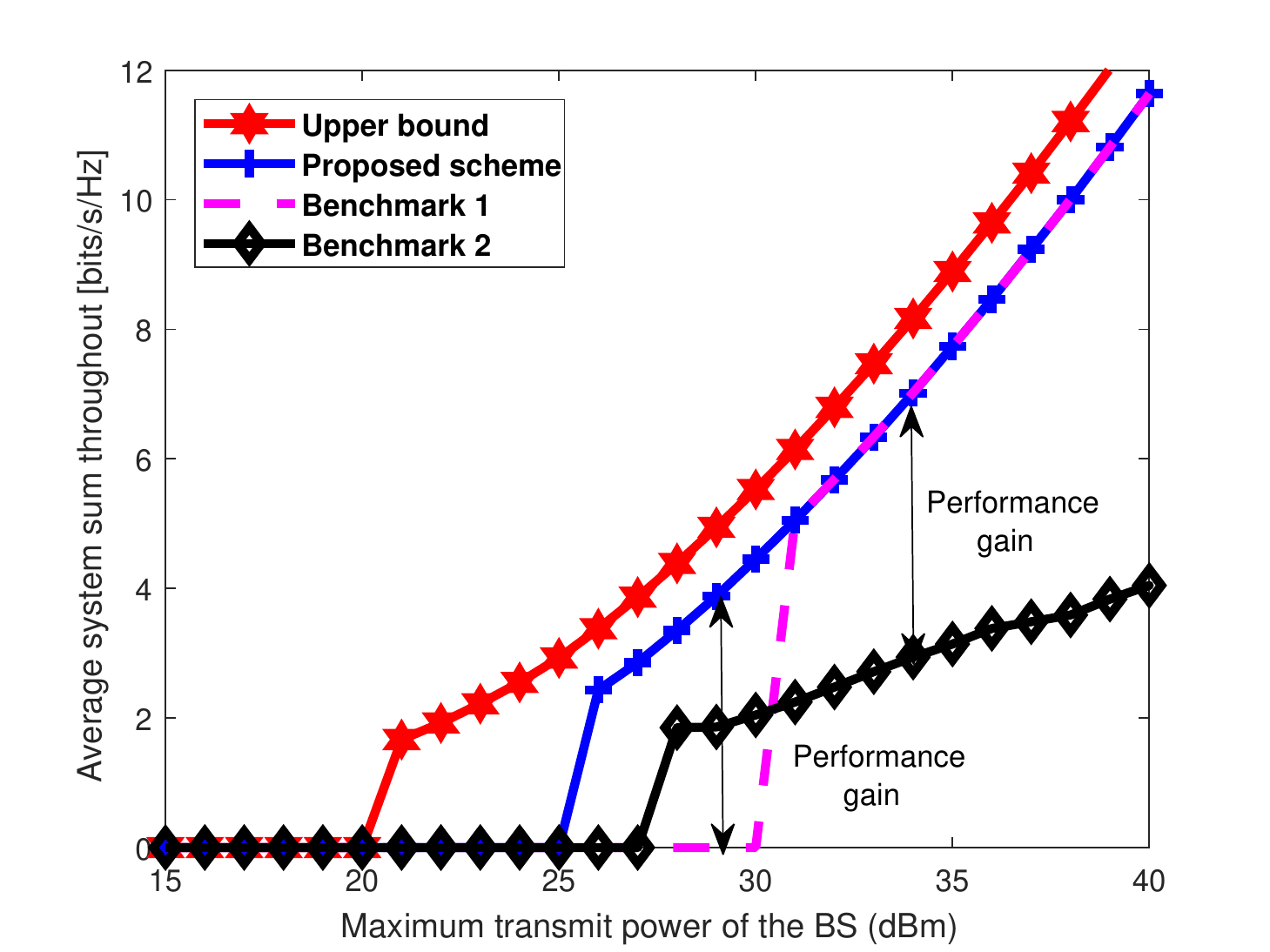}
 \centering
    \caption{ Average system sum throughput (bits/s/Hz) vs. maximum transmit power at the BS (dBm), $P_{\text{max}}$, for different resource allocation schemes and $K=4$ users.}
    \vspace{-6mm}
    \label{changepower}
\end{figure}
\begin{figure}[t]
\includegraphics[scale=0.5]{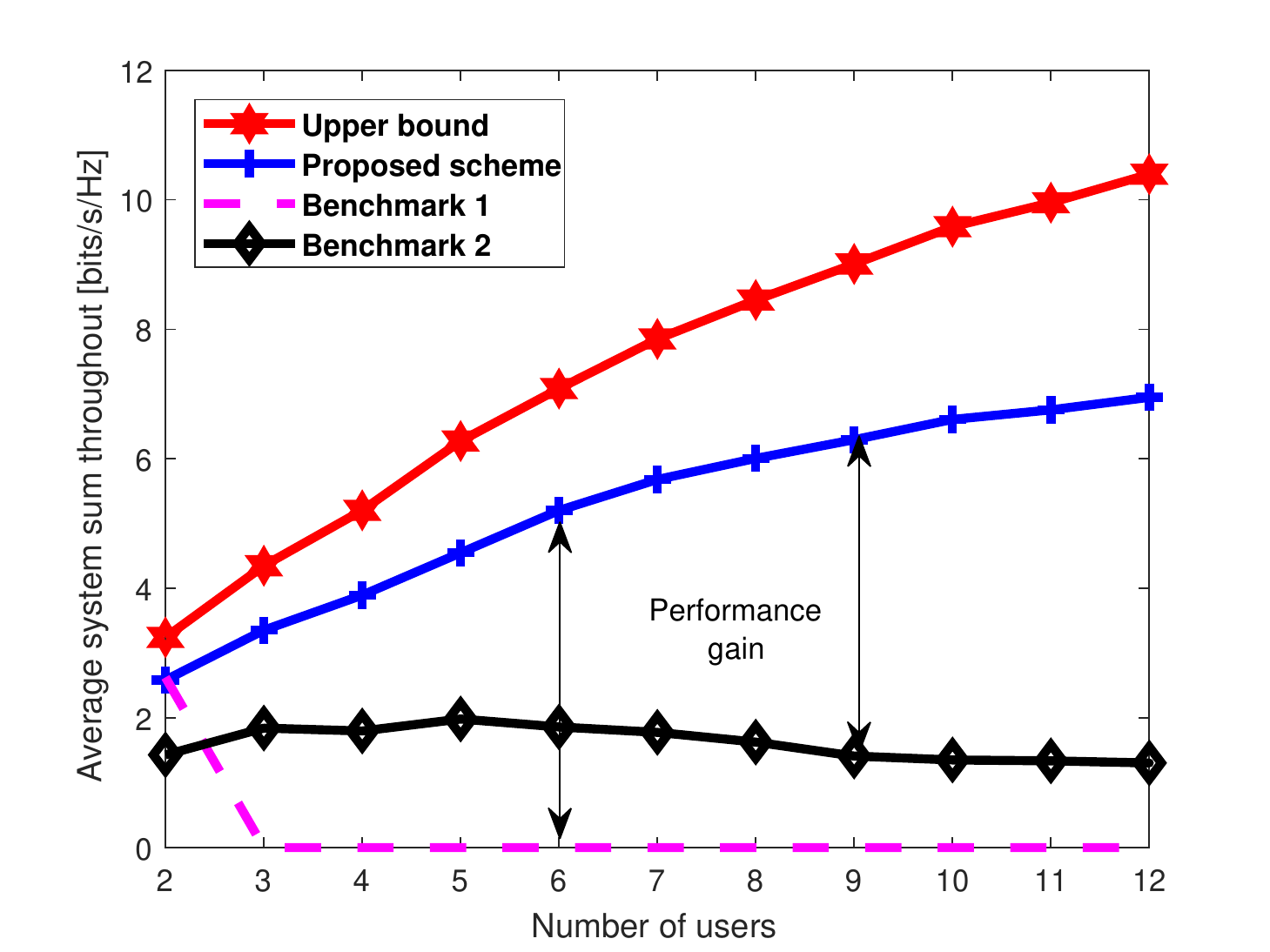}
 \centering
 \caption{Average system sum throughput (bits/s/Hz) vs. number of users for different resource allocation schemes and $P_{\text{max}}=~27$~dBm.}
  \vspace{-7mm}
    \label{users}
\end{figure}
 Fig.~\ref{users} shows the average sum  throughput versus the number of users for $P_{\text{max}}=27$~dBm. We assume that $N=4$ and the maximum packet error probability for all users is set to $\epsilon_{k}=10^{-6} \, \forall k$. The URLLC users' delay requirements are $D_{1}=2$ and $D_{k}=4$, $\forall$ $k\neq1$. As can be observed from Fig.~\ref{users}, for the considered parameters, for the upper bound and the proposed scheme, the average sum throughput increases with the number of users, as these schemes can exploit multi-user diversity. 
  However, benchmark scheme 1 fails to support even $K=3$ users due to the invalid design criterion (Shannon's capacity formula), which does not take into account the effect of short packet transmission on the achievable rate and reliability. As a result, the solution based on this design may violate constraint $\mbox{C1}$ in (\ref{optimization1}). For benchmark scheme 2, the average sum throughput increases up to $K=5$ users, but decreases if more users are added, because this scheme does not exploit all available degrees of freedom for resource allocation. 
  \section{Conclusion}
In this paper, we studied the resource allocation algorithm design for broadband URLLC-OFDMA systems. The resource allocation algorithm design was formulated as a non-convex optimization problem for maximization of the weighted system sum throughput subject to QoS constraints for the URLLC users. To achieve a favorable tradeoff between complexity and performance, a low complexity sub-optimal algorithm was developed to solve the optimization problem. Simulation results revealed that the proposed system design can support URLLC, and the proposed algorithm outperformed two benchmark schemes.\begin{appendices}
	\section{}
In the following, we show that problems (\ref{eeq17}) and (\ref{eeq16}) are equivalent. Let $U^{*}$ denote the optimal objective value of (\ref{eeq17}). We define the Lagrangian function, denoted by ${\mathcal {L}}(\bar{\mathbf {p}},\mathbf {s},\beta)$, as \cite{Boyed}
\begin{IEEEeqnarray}{lll}  \label{e25} \ {\mathcal {L}}(\bar{\mathbf {p}},\mathbf {s},\beta)= -\bar{F}(\bar {\mathbf {p}})+\bar{V}(\bar {\mathbf {p}}) +\beta(W(\mathbf{s})-E(\mathbf{s})),
\end{IEEEeqnarray}
where $\beta$ is the Lagrange multiplier corresponding to constraint $\mbox {C4a}$.  
Note that $W(\mathbf{s})-E(\mathbf{s}) \geq 0$ holds.
Using Lagrange duality \cite{Boyed}, we have the following relation \footnote{Weak duality holds for convex and non-convex optimization problems\cite{Boyed}.}
\begin{IEEEeqnarray}{lll} \label{eq28} &\hspace {-2pc}U_{d}^{*}=\underset {\beta \ge 0}{ \mathop {\mathrm {max}}\nolimits } \quad \underset {\bar {\mathbf {p}},\mathbf {p},\mathbf {s} \in \boldsymbol{\Omega }}{ \mathop {\mathrm {min}}\nolimits } \quad {\mathcal {L}}(\bar {\mathbf {p}},\mathbf {s},\beta) \IEEEyesnumber \IEEEyessubnumber
\\\overset {(a)}{\le }&\underset {\bar {\mathbf {p}},\mathbf {p},\mathbf {s} \in \boldsymbol{\Omega }}{ \mathop {\mathrm {min}}\nolimits } \quad \underset {\beta \ge 0}{ \mathop {\mathrm {max}}\nolimits } \quad {\mathcal {L}}(\bar {\mathbf {p}},\mathbf {s},\beta) = U^{*},\IEEEyessubnumber
\end{IEEEeqnarray}
where $\boldsymbol{\Omega }$ is the feasible set specified by the constraints in (\ref{eeq17}).  
In the following, we first prove the strong duality, i.e., $U_{d}^{*}=U^{*}$. 
Let $(\bar{\mathbf{p}}^{*},\mathbf{p}^{*},\mathbf{s}^{*}, \beta^{*} )$ denotes the solution of (34a). For this solution, the following two cases are possible. \textit{Case 1)} If $W(\mathbf{s})-E(\mathbf{s})>0$ holds, the optimal $\beta^{*}$ is infinite. Hence, $U_{d}^{*}$ is infinite too, which contradicts the fact that it is upper bounded by a finite-value $U^{*}$.  \textit{Case 2)} If $W(\mathbf{s})-E(\mathbf{s})=0$ holds, then $(\bar{\mathbf{p}}^{*},\mathbf{p}^{*},\mathbf{s}^{*})$ belongs to the feasible set of the original problem (\ref{eeq16}) which implies $U_{d}^{*}=U^{*}$. Hence, strong duality holds, and we can focus on solving the dual problem (34a) instead of the primal problem (34b). 

Next, we show that any $\beta \geq \beta_{0}$ is an optimal solution for dual problem (34a), i.e., $\beta^{*}$, where $\beta_{0}$ is some sufficiently large number. To do so, we show that ${\Theta}(\beta)\triangleq \underset {\bar {\mathbf {p}}, \mathbf {p},\mathbf {s} \in \boldsymbol{\Omega }}{ \mathop {\mathrm {min}}\nolimits} \quad {\mathcal {L}}(\bar {\mathbf {p}},\mathbf {s},\beta)$ is a monotonically increasing function of $\beta$. Recall that $W(\mathbf{s})-E(\mathbf{s})\geq 0$ holds for any given $(\bar{\mathbf{p}},\mathbf {s}) \in \boldsymbol{\Omega }$. Therefore, ${\mathcal {L}}(\bar {\mathbf {p}},\mathbf {s},\beta_{1}) \leq {\mathcal {L}}(\bar {\mathbf {p}},\mathbf {s},\beta_{2})$ holds for any given $(\bar{\mathbf{p}},\mathbf {s}) \in \boldsymbol{\Omega }$ and $0 \leq \beta_{1}\leq \beta_{2}$. This implies ${\Theta}(\beta_{1})\leq {\Theta}(\beta_{2})$   and that ${\Theta}(\beta)$ is monotonically increasing in $\beta$. Using this result, we can conclude that  
$\Theta (\beta) =U^{*},\, \forall \beta \ge  {\beta }_{0}$.

In summary, due to strong duality, we can use the dual problem  (\ref{eeq16}) to find the solution of the primal problem (\ref{eeq17}) and any $\beta \geq \beta_{0}$ is an optimal dual variable. These results are concisely given in Theorem 1 which concludes the proof.
\section{}
In the following, we show that the objective function of (\ref{eeq16}) is the difference of two convex functions. To this end, we rewrite it as follows \vspace*{-2mm}
\begin{IEEEeqnarray}{lll}\bar{U}(\bar{\mathbf{p}},\mathbf {s})=-\bar{F}(\bar{\mathbf{p}})+\bar{V}(\bar{\mathbf{p}})+\beta\left(W(\mathbf {s})-E(\mathbf {s})\right). \end{IEEEeqnarray}
 $\bar{U}(\bar{\mathbf{p}},\mathbf {s})$ is the difference of two convex functions if $\bar{F}(\bar{\mathbf{p}})$ is concave, $\bar{V}(\bar{\mathbf{p}})$ is concave, and both $W(\mathbf {s})$ and $E(\mathbf {s})$ are convex. Function $\bar{F}(\mathbf{\bar{p}})$ is a sum of logarithmic functions, and hence, it is a concave function\cite{Boyed}. Moreover, $W(\mathbf {s})-E(\mathbf {s})$ is a difference of two convex functions \cite{yan}. Furthermore, to prove  $\bar{V}(\mathbf{\bar{p}})$ is concave, we rewrite it as follows \vspace*{-2mm}
\begin{IEEEeqnarray}{lll}
 \bar{V}(\mathbf{\bar{p}})=\sum_{k=1}^{K}w_{k}Q^{-1}(\epsilon_{k})\sqrt{\sum_{m=1}^{M}\sum_{n=1}^{N}\bar{V}^{(m)}_{k}[n]} ,
 \end{IEEEeqnarray}  
where \vspace*{-2mm}
\begin{IEEEeqnarray}{lll}
   \bar{V}^{(m)}_{k}[n]= a^{2}\left(1-\left(\frac{1}{1+\bar{p}^{(m)}_{k}[n]g_{k}^{(m)}}\right)^2\right).
\end{IEEEeqnarray} 
 Note that $\bar{V}(\mathbf{\bar{p}})$ is always positive, because for $\epsilon \in (0,0.5)$, $Q^{-1}(\epsilon_{k}) >0$ holds. 
  To prove the concavity of $\bar{V}(\mathbf{\bar{p}})$, first we will show that  $\bar{V}^{(m)}_{k}[n]$ is concave by taking the first and second derivatives with respect to $\bar{p}^{(m)}_{k}[n]$  as follows:  
\begin{eqnarray}
\frac{\mathrm{d}\bar{V}^{(m)}_{k}[n]}{\mathrm{d}\bar{p}^{(m)}_{k}[n]}&=&\frac{ 2a^{2}g_{k}^{(m)}}{(1+\bar{p}^{(m)}_{k}[n]g_{k}^{(m)})^3}, \\
\frac{\mathrm{d}^2\bar{V}^{(m)}_{k}[n]}{{\mathrm{d}(\bar{p}^{(m)}_{k}[n]})^2}&=&\frac{-6 a^{2}(g_{k}^{(m)})^2}{(1+\bar{p}^{(m)}_{k}[n]g_{k}^{(m)})^4}.
\end{eqnarray} 
 Function $\bar{V}^{(m)}_{k}[n]$ is concave because the second derivative is negative for any $\bar{p}^{(m)}_{k}[n]>0$. Moreover, since a sum of concave functions is also concave, $\sum_{m=1}^{M}\sum_{n=1}^{N}\bar{V}^{(m)}_{k}[n]$ is concave. By using the composition rules of convex analysis, the square root is concave and the extended-value extension on the real line is non-decreasing \cite{Boyed}. Thus, the square root of a concave function is concave. Finally, a weighted sum of concave functions is also concave. This concludes the proof. 
\end{appendices}
\bibliography{ref}  
\bibliographystyle{IEEEtran}
\end{document}